\providecommand{\tabularnewline}{\\}
\numberwithin{equation}{section}
\theoremstyle{plain}
\newtheorem{thm}{\protect\theoremname}[section]
  \theoremstyle{definition}
  \newtheorem{defn}[thm]{\protect\definitionname}
  \theoremstyle{plain}
  \newtheorem{cor}[thm]{\protect\corollaryname}
  \theoremstyle{plain}
  \newtheorem{algorithm}[thm]{\protect\algorithmname}
  \theoremstyle{definition}
  \newtheorem{example}[thm]{\protect\examplename}
  \providecommand{\algorithmname}{Algorithm}
  \providecommand{\corollaryname}{Corollary}
  \providecommand{\definitionname}{Definition}
  \providecommand{\examplename}{Example}
\providecommand{\theoremname}{Theorem}
\begin{document}

\title{Fitting log-linear models in sparse contingency tables using the
eMLEloglin R package}

\author{Matthew Friedlander}

\maketitle
\noindent \textbf{Keywords.} Maximum likelihood estimation, MLE,
boundary, non-existence, extended MLE, face, facial set, convex support,
cone, exponential family, log-linear model, R, Rochdale data, eMLEloglin.
\begin{abstract}
Log-linear modeling is a popular method for the analysis of contingency
table data. When the table is sparse, the data can fall on the boundary
of the convex support, and we say that the MLE does not exist in the
sense that some parameters cannot be estimated. However, an extended
MLE always exists, and a subset of the original parameters will be
estimable. The eMLEloglin package determines which sampling zeros
contribute to the non-existence of the MLE. These problematic cells
can be removed from the contingency table and the model can then be
fit (as far as is possible) using the $\mathtt{glm()}$ function. 
\end{abstract}

\section{Introduction}

Data in the form of a contingency table arise when individuals are
cross classified according to a finite number of criteria. Log-linear
modeling (see e.g., \cite{Agresti}, \cite{Christensen}, or \cite{Bishop})
is a popular and effective methodology for analyzing such data enabling
the practitioner to make inferences about dependencies between the
various criteria. For hierarchical log-linear models, the interactions
between the criteria can be represented in the form of a graph; the
vertices represent the criteria and the presence or absence of an
edge between two criteria indicates whether or not the two are conditionally
independent \cite{Lauritzen}. This kind of graphical summary greatly
facilitates the interpretation of a given model. 

Log-linear models are typically fit by maximum likelihood estimation
(i.e. we attempt compute the MLE of the expected cell counts and log-linear
parameters). It has been known for some time that problems arise when
the sufficient statistic falls on the boundary of the convex support,
say $C$, of the model. This generally occurs in sparse contingency
tables with many zeros. In such cases, algorithms for computing the
MLE can fail to converge or become unstable. Moreover, the effective
model dimension will be reduced and the degrees of freedom of the
usual goodness of fit statistics will be incorrect \cite{Feinberg1}.
Only fairly recently, have algorithms been developed to begin to deal
with this situation (\cite{Eriksson}, \cite{Geyer} and \cite{Feinberg2}).
It turns out that identification of the face $F$ of $C$ containing
the data in its relative interior is crucial to efficient and reliable
computation of the MLE and the effective model dimension. If $F=C$,
then the MLE exists and it's calculation is straightforward. If not
(i.e. $F\subset C$), the log-likelihood has its maximum on the boundary
and remedial steps must be taken to find and compute those parameters
that can be estimated.

The outline of this paper is as follows. In section 2, we describe
necessary and sufficient conditions for the existence of the MLE.
In section 3, we place these conditions in the context of convex geometry.
In section 4, we describe a linear programming algorithm to find $F$.
We then discuss how to compute ML estimates and find the effective
model dimension. In section 5, we introduce the eMLEloglin R package
for carrying out the tasks described in section 4.

\section{Conditions for the existence of the MLE}

Let $V$ be a finite set of indices representing $|V|$ criteria.
We assume that the criterion labeled by $v\in V$ can take values
in a finite set $\mathcal{I}_{v}$. The resulting counts are gathered
in a contingency table such that 
\[
\mathit{\mathcal{I}=}\prod_{v\in V}\mathcal{I}_{v}
\]
is the set of cells $i=\left(i_{v},v\in V\right)$. The vector of
cell counts is denoted $n=\left(n(i),i\in\mathcal{I}\right)$ with
corresponding mean $m(i)=E(n)=\left(m(i),i\in\mathcal{I}\right)$.
For $D\subset V,$
\[
\mathcal{I}_{D}=\prod_{v\in D}\mathcal{I}_{v}
\]
is the set of cells $i_{D}=(i_{v},v\in D)$ in the $D$-marginal table.
The marginal counts are $n(i_{D})=\sum_{j:j_{D}=i_{D}}n(j)$ with
$m(i_{D})=E\left(n(i_{D})\right)$. 

We assume that the components of $n$ are independent and follow a
Poisson distribution (i.e. Poisson sampling) and that the cell means
are modeled according to a hierarchical model 
\[
\log\left(m\right)=X\theta
\]
where $X$ is an $|\mathcal{I}|\times d$ design matrix of full column
rank with rows $\left\{ f_{i},i\in\mathcal{I}\right\} $ and $\theta$
is a vector of log-linear parameters with $\theta\in R^{d}$. The
results herein also apply under multinomial or product multinomial
sampling. We assume that the first component of $f_{i}$ is 1 for
all $i\in\mathcal{I}$ and that ``baseline constraints'' are used
making the $f_{i}'s$ binary 0/1 vectors. For each cell, $i\in\mathcal{I}$,
we have $\log m(i)=\left\langle f_{i},\theta\right\rangle $. 

The sufficient statistic $t=X^{T}n$ has a probability distribution
in the natural exponential family
\[
f(t)=\exp\left(\left\langle \theta,t\right\rangle -\sum_{i\in\mathcal{I}}\exp\left(\left\langle f_{i},\theta\right\rangle \right)\right)\nu\left(dt\right)
\]
with respect to a discrete measure $\nu$ that has convex support
\[
C^{p}=\left\{ \sum_{i\in\mathcal{I}}y(i)f_{i},y(i)\ge0,i\in\mathcal{I}\right\} =\mathrm{cone}\left\{ f_{i},i\in\mathcal{I}\right\} 
\]
i.e. the convex cone generated by the rows of the design matrix $X$.
The log-likelihood, as a function of $m,$ is
\[
l(m)=\sum_{i\in\mathcal{I}}n(i)\log m(i)-\sum_{i\in\mathcal{I}}m(i)
\]
Let $\mathcal{M}$ be the column space of $X$. It is well known that
the log-likelihood is strictly concave with a unique maximizer $\hat{m}=\mathrm{argsup}_{\log m\in\mathcal{M}}l(m)$
that satisfies $X^{T}\hat{m}=t$. 
\begin{defn}
\label{def:2.1} If $\hat{m}>0$, we say that $\hat{m}$ is the MLE
of $m$ while if $\hat{m}(i)=0$ for some $i\in\mathcal{I}$ we call
$\hat{m}$ the extended MLE. 
\end{defn}
The following important theorem from \cite{Haberman} gives necessary
and sufficient conditions for $\hat{m}>0$, i.e. the existence of
the MLE.
\begin{thm}
\label{thm:2.2} The MLE exists if and only if there exists a $y$
such that $X^{T}y=0$ and $y+n>0$. \end{thm}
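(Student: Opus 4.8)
The plan is to first recast the stated condition into a more transparent equivalent and then handle the two implications separately, with essentially all the work in the reverse direction. Setting $m^{*}=y+n$, the pair of requirements $X^{T}y=0$ and $y+n>0$ collapses (using $X^{T}n=t$) into the single statement that there exists a strictly positive vector $m^{*}$ with $X^{T}m^{*}=t$. So the theorem is equivalent to: the MLE exists if and only if the sufficient statistic $t$ is realized by some strictly positive mean vector in the affine set $\{m:X^{T}m=t\}$. (This is also the point of contact with the convex-geometric picture promised for the next section, since such an $m^{*}$ witnesses that $t$ sits in the relative interior of the cone.)

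For the forward implication I would simply invoke the two facts already recorded before the theorem, namely that the maximizer $\hat{m}$ satisfies the score equation $X^{T}\hat{m}=t$ and that the MLE existing means $\hat{m}>0$. Then $y=\hat{m}-n$ does the job: $X^{T}y=X^{T}\hat{m}-X^{T}n=t-t=0$, and $y+n=\hat{m}>0$. Nothing beyond the definition of $\hat{m}$ is needed here.

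The substance lies in the reverse implication, which I would argue by coercivity. Reparametrizing through $\log m=X\theta$, the log-likelihood becomes $l(\theta)=\langle\theta,t\rangle-\sum_{i}\exp(\langle f_{i},\theta\rangle)$, and substituting $t=\sum_{i}m^{*}(i)f_{i}$ regroups it as $l(\theta)=\sum_{i}\bigl(m^{*}(i)\langle f_{i},\theta\rangle-\exp(\langle f_{i},\theta\rangle)\bigr)$. Each summand, as a function of the scalar $s=\langle f_{i},\theta\rangle$, has the form $m^{*}(i)s-e^{s}$; because $m^{*}(i)>0$, this is strictly concave, bounded above by $m^{*}(i)(\log m^{*}(i)-1)$, and tends to $-\infty$ as $s\to\pm\infty$. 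The aim is then to show $l$ is coercive, so its supremum is attained at a finite $\hat{\theta}$, whence $\hat{m}=\exp(X\hat{\theta})>0$ and, by the uniqueness already asserted, this finite maximizer is exactly $\hat{m}$.

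To prove coercivity I would take any sequence with $\|\theta_{k}\|\to\infty$, pass to a subsequence with $\theta_{k}/\|\theta_{k}\|\to\omega$ where $\|\omega\|=1$, and exploit full column rank: since the rows $f_{i}$ span $R^{d}$, some index $i_{0}$ has $\langle f_{i_{0}},\omega\rangle\neq0$, so $|\langle f_{i_{0}},\theta_{k}\rangle|\to\infty$ and that summand diverges to $-\infty$; as every other summand is bounded above, $l(\theta_{k})\to-\infty$. The delicate point I expect to have to watch is precisely the indices with $\langle f_{i},\omega\rangle=0$, whose summands need not diverge. The full-rank hypothesis is exactly what guarantees that at least one genuinely diverging term survives to dominate the bounded remainder, and it is the crux on which the whole argument turns.
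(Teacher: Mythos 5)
Your proposal is correct and follows essentially the same route as the paper: the forward direction via $y=\hat{m}-n$ is identical, and your reverse direction rewrites the log-likelihood as $\sum_{i}\bigl(m^{*}(i)\langle f_{i},\theta\rangle-\exp(\langle f_{i},\theta\rangle)\bigr)$ with $m^{*}=y+n>0$, which is exactly the paper's decomposition $\sum_{i}\bigl((y(i)+n(i))\mu(i)-\exp(\mu(i))\bigr)$ with termwise strict concavity and boundedness above. The only cosmetic difference is that you establish attainment by a coercivity/subsequence argument in $\theta$-space using full column rank, where the paper argues compactness of a level set in $\mu$-space; these are interchangeable.
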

\begin{proof}
Suppose that $\hat{m}$ exists. Then $X^{T}\hat{m}=t$ and $X^{T}(\hat{m}-n)=0.$
Letting $y=\hat{m}-n$ we have $X^{T}y=0$ and $y+n=\hat{m}>0$. 
\end{proof}
Conversely, suppose that there exists a $y$ such that $X^{T}y=0$
and $y+n>0$. Then $\sum_{i\in\mathcal{I}}y(i)\log m(i)=\sum_{i\in\mathcal{I}}y(i)\left\langle f_{i},\theta\right\rangle =\left\langle \sum_{i\in\mathcal{I}}y(i)f_{i},\theta\right\rangle =\left\langle X^{T}y,\theta\right\rangle =0$.
We can then write the log-likelihood as 
\begin{eqnarray*}
l(m) & = & \sum_{i\in\mathcal{I}}n(i)\log m(i)-\sum_{i\in\mathcal{I}}m(i)\\
 & = & \sum_{i\in\mathcal{I}}\left(y(i)+n(i)\right)\log m(i)-\sum_{i\in\mathcal{I}}m(i)
\end{eqnarray*}
Let $\mu=\log m$ and consider the real valued function $f\left(\mu(i)\right)=\left(y(i)+n(i)\right)\mu(i)-\exp\left(\mu(i)\right)$
for some $i\in\mathcal{I}$. Differentiating with respect to $\mu(i)$,
we have $f'\left(\mu(i)\right)=\left(y(i)+n(i)\right)-\exp\left(\mu(i)\right)$
and $f''(\mu(i))=-\exp\left(\mu(i)\right)<0$, and we see that $f$
is strictly concave with a finite maximum $\mu(i)=\log\left(y(i)+n(i)\right)$
and $l$ is bounded above. $l$ is not bounded below, however, since
$\lim_{\mu(i)\rightarrow\pm\infty}$$f\left(\mu(i)\right)=-\infty$.

Let $A$ be the set $\left\{ \mu\in\mathcal{M}:l\left(\mu\right)\ge c\right\} $
where $c\in R$. The number $c$ can be chosen small enough such that
the set $A$ is non-empty. Then $A$ is bounded and, since $l$ is
continuous function $\mu$, it is closed. It follows that $A$ is
compact and $l$ must have a finite maximum, $\hat{\mu}$ for some
$\mu\in A$. We conclude that $\hat{m}>0$.
\begin{cor}
\label{cor:2.3} If $n>0$, the MLE exists.\end{cor}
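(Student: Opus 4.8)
The plan is to invoke Theorem~\ref{thm:2.2} directly, since that theorem already characterizes existence of the MLE as the solvability of the system $X^{T}y=0$ and $y+n>0$. Under the hypothesis $n>0$, it therefore suffices to exhibit a single vector $y$ meeting both constraints, and the corollary follows immediately.

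First I would take the trivial choice $y=0$. Then $X^{T}y=X^{T}0=0$, so the first condition holds automatically, regardless of the design matrix $X$. For the second condition, note that $y+n=0+n=n$, and by assumption $n>0$, so $y+n>0$ as well. Both hypotheses of Theorem~\ref{thm:2.2} are thus satisfied by this choice of $y$.

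Having verified that a suitable $y$ exists, I would conclude from Theorem~\ref{thm:2.2} that $\hat{m}>0$, i.e.\ the MLE exists. There is essentially no obstacle here: the corollary is the degenerate special case of the theorem in which the correction vector $y$ can be taken to be zero, which is possible precisely because strict positivity of the raw counts $n$ already places $n$ in the admissible region $\{v:v>0\}$. The only conceptual point worth flagging is that this shows strict positivity of all cell counts is a simple sufficient (though, by the full theorem, far from necessary) condition for existence.
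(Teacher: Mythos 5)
Your proof is correct and is exactly the paper's argument: the paper's entire proof of this corollary is ``Take $y=0$ in Theorem~\ref{thm:2.2}.'' Your write-up simply spells out the verification that $X^{T}0=0$ and $0+n=n>0$, which is the same one-line idea.
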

\begin{proof}
Take $y=0$ in theorem \ref{thm:2.2}.
\end{proof}

\section{Some basics of convex geometry}

In this section we give some basic definitions that we need from convex
geometry. Some supplementary references are \cite{Rockafellar} and
\cite{Ziegler}. In general, a polytope is a closed object with flat
sides. The relative interior of a polytope is its interior with respect
to the affine space of smallest dimension containing it. For a polytope
that is full dimensional, the relative interior corresponds the the
topological interior (int). The cone, generated by the points $a_{1},a_{2},...,a_{n}$
is a polytope given by 
\[
\mathrm{cone}\left\{ a_{1},a_{2},...,a_{n}\right\} =\left\{ \sum_{i=1}^{n}a_{i}x_{i}:x_{i}\ge0,i=1,2,...,n\right\} 
\]
and its relative interior is
\[
\left\{ \sum_{i=1}^{n}\lambda_{i}a_{i}:\lambda_{i}>0,i=1,2,...,n\right\} .
\]
The convex hull of the same set of points, $\mathrm{conv}\left\{ a_{1},a_{2},...,a_{n}\right\} $,
is also a polytope with the added restriction that $\sum_{i=1}^{n}\lambda_{i}=1$.
A convex polytope $P$ can be represented as the convex hull of a
finite number of points (the V-representation) or, equivalently, as
the intersection of a finite number of half spaces (the H-representation).
Cones and convex hulls are examples of convex polytopes. Henceforth,
we assume that $P$ is a convex polytope in $R^{d}$. 

A face of $P$ is a nonempty set of the form $F=P\cap\left\{ x\in R^{d}:c^{T}x=r\right\} $
where $c^{T}x\le r$ for all $x\in P$. The set $\left\{ x\in R^{d}:c^{T}x=r\right\} $
is called a supporting hyperplane to $P$. The faces of dimension
0 are called extreme points and, if $P$ is a cone, the one dimensional
faces of $P$ are called the extreme rays of $P$. Moreover, when
$P$ is a cone all faces include the origin so that $r=0$ and the
origin is the only face of dimension 0. The dimension of a face $F$
is the dimension of its affine hull
\[
\mathrm{aff}\left\{ \sum_{i}\lambda_{i}x_{i}:x_{i}\in F,\sum_{i}\lambda_{i}=1\right\} 
\]
which is the set of all affine combinations of the points in $F$.
Finally, note that by taking $c=0$, $P$ itself is a face. We now
have a sequence of important theorems. .
\begin{thm}
\label{thm:3.1} Any face of $C^{p}$ of dimension at least one is
the cone generated by the $f_{i}'s$ that belong to that face.\end{thm}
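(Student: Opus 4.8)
The plan is to argue directly from the definition of a face as the intersection of $C^{p}$ with a supporting hyperplane, exploiting the fact (noted in the excerpt) that for a cone every face passes through the origin, so we may take $r=0$. Thus I would write the given face as $F = C^{p}\cap\{x\in R^{d}: c^{T}x = 0\}$ with $c^{T}x\le 0$ for all $x\in C^{p}$, and introduce the index set $S=\{i\in\mathcal{I}: c^{T}f_{i}=0\}$, which collects exactly the generators lying on $F$ (since each $f_{i}\in C^{p}$, we have $f_{i}\in F$ precisely when $c^{T}f_{i}=0$). The theorem then amounts to the single identity $F=\mathrm{cone}\{f_{i}: i\in S\}$, which I would establish by proving the two inclusions.

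The inclusion $\mathrm{cone}\{f_{i}: i\in S\}\subseteq F$ is immediate: any nonnegative combination $\sum_{i\in S}y(i)f_{i}$ lies in $C^{p}$ by definition of the cone, and satisfies $c^{T}\bigl(\sum_{i\in S}y(i)f_{i}\bigr)=\sum_{i\in S}y(i)\,c^{T}f_{i}=0$ because every $i\in S$ kills the inner product; hence it lies on the supporting hyperplane and so in $F$. This direction is pure bookkeeping.

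The reverse inclusion $F\subseteq\mathrm{cone}\{f_{i}: i\in S\}$ carries the real content. Given $x\in F$, I would first use $x\in C^{p}$ to write $x=\sum_{i\in\mathcal{I}}y(i)f_{i}$ with every $y(i)\ge 0$, and then use $x\in F$ to obtain $0=c^{T}x=\sum_{i\in\mathcal{I}}y(i)\,c^{T}f_{i}$. The decisive step is a sign argument: since $f_{i}\in C^{p}$ forces $c^{T}f_{i}\le 0$ and each $y(i)\ge 0$, every summand $y(i)\,c^{T}f_{i}$ is nonpositive, so a vanishing sum of nonpositive terms forces each term to vanish. Therefore $y(i)>0$ implies $c^{T}f_{i}=0$, i.e.\ $i\in S$; discarding the zero-coefficient terms rewrites $x=\sum_{i\in S}y(i)f_{i}$, which lies in $\mathrm{cone}\{f_{i}: i\in S\}$.

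The main obstacle — indeed the only subtle point — is this ``all terms must vanish'' step, where the geometry of the supporting hyperplane is converted into the combinatorial statement that only the generators with $c^{T}f_{i}=0$ can appear with positive weight. The hypothesis that $F$ has dimension at least one plays no role in the inclusions themselves; it merely excludes the degenerate apex face $\{0\}$, for which $S$ would be empty and the identity would hold only under the convention $\mathrm{cone}(\emptyset)=\{0\}$. For every face of dimension at least one the displayed argument applies verbatim.
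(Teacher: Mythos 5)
Your proposal is correct and follows essentially the same route as the paper: write the face as $F=C^{p}\cap\{x:c^{T}x=0\}$, expand a point of $F$ as a nonnegative combination of the $f_{i}$'s, and use the sign condition $c^{T}f_{i}\le 0$ to force every generator appearing with positive weight to satisfy $c^{T}f_{i}=0$, plus the easy reverse inclusion. Your direct ``sum of nonpositive terms vanishes'' phrasing even handles the $t=0$ case uniformly, which the paper treats separately via the first coordinate of the $f_{i}$'s.
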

\begin{proof}
Suppose that $t$ belongs to a face $F=C^{p}\cap\left\{ x\in R^{J}:c^{T}x=0\right\} $
of $C_{p}$ of dimension at least one. Then $F$ contains at least
one point other than the origin. Let $\mathcal{\mathcal{I}}_{F}=\left\{ i\in\mathcal{I}:f_{i}\in F\right\} $.
Every point in $C^{p}$ can be expressed as a conical combination
of the $f_{i}'s$ and, hence, there exist non-negative real numbers
$\left(\lambda_{i},i\in\mathcal{I}\right)$ such that $t=\sum_{i\in\mathcal{I}}\lambda_{i}f_{i}=\sum_{i\in\mathcal{I}_{F}}\lambda_{i}f_{i}+\sum_{i\in\mathcal{I}\backslash\mathcal{I}_{F}}\lambda_{i}f_{i}$.
If $t=0$, then since the first coordinate of each $f_{i}$ is 1,
we must have $\lambda_{i}=0$ for all $i\in\mathcal{I}$ and we can
certainly write $t=\sum_{i\in\mathcal{I}_{F}}\lambda_{i}f_{i}$. If
$t\ne0$ then there must be an $i\in\mathcal{I}$ such that $\lambda_{i}>0$.
Suppose that $\lambda_{i}>0$ for some $i\in\mathcal{I\backslash\mathcal{I}}_{F}$.
Then 
\begin{eqnarray*}
0 & = & c^{T}t=c^{T}\left(\sum_{i\in\mathcal{I}_{F}}\lambda_{i}f_{i}+\sum_{i\in\mathcal{I}\backslash\mathcal{I}_{F}}\lambda_{i}f_{i}\right)\\
 & = & \sum_{i\in\mathcal{I}_{F}}\lambda_{i}\left(c^{T}f_{i}\right)+\sum_{i\in\mathcal{I}\backslash\mathcal{I}_{F}}\lambda_{i}\left(c^{T}f_{i}\right)\\
 & = & \sum_{i\in\mathcal{I}\backslash\mathcal{I}_{F}}\lambda_{i}\left(c^{T}f_{i}\right)\\
 & < & 0
\end{eqnarray*}
and we have a contradiction. Therefore, any $\lambda_{i}=0$ for all
$i\in\mathcal{I}\backslash\mathcal{I}_{F}$ and $t=\sum_{i\in\mathcal{I}_{F}}\lambda_{i}f_{i}$
which also implies that $\mathcal{I}_{F}\ne\emptyset$. 

We have just shown that if $t\in F$ then $t$ can be written as a
conical combination of the $f_{i}'s$ in $\mathcal{I}_{F}$. Let us
show the converse. Indeed, for any set of non-negative real numbers
$\left(\lambda_{i},i\in\mathcal{I}\right)$ with $\sum_{i\in\mathcal{I}_{F}}\lambda_{i}f_{i}\in F$
we have $c^{T}\left(\sum_{i\in\mathcal{I}_{F}}\lambda_{i}f_{i}\right)=\sum_{i\in\mathcal{I}_{F}}\lambda_{i}\left(c^{T}f_{i}\right)=0$.
\end{proof}
A simple corollary of is that there is always one and only one face
of $C^{p}$ that contains $t$ in its relative interior, provided
that $t\ne0$. Let us note this formally. 
\begin{cor}
\label{cor:3.2} If $t\in C^{p}$ and $t\ne0$ then there is a unique
face of $C^{p}$ containing $t$ in its relative interior.
\end{cor}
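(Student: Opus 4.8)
The statement has two parts, existence and uniqueness, and the plan is to extract uniqueness almost immediately from Theorem \ref{thm:3.1} and to obtain existence by producing the smallest face that contains $t$.

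For uniqueness, suppose $t$ lay in the relative interior of two faces $F_{1}$ and $F_{2}$. Since $t\neq0$ and the origin is the only zero-dimensional face, both $F_{1}$ and $F_{2}$ have dimension at least one, so Theorem \ref{thm:3.1} gives $F_{k}=\mathrm{cone}\{f_{i}:i\in\mathcal{I}_{F_{k}}\}$. By the description of the relative interior of a cone recorded earlier in this section, $t\in\mathrm{relint}(F_{1})$ means $t=\sum_{i\in\mathcal{I}_{F_{1}}}\lambda_{i}f_{i}$ with every $\lambda_{i}>0$; in particular this is a conical representation of $t$ whose support is exactly $\mathcal{I}_{F_{1}}$. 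Because $t\in F_{2}$, the computation inside the proof of Theorem \ref{thm:3.1} (the indices outside $\mathcal{I}_{F_{2}}$ must carry zero weight in any conical representation of a point of $F_{2}$) forces $\mathcal{I}_{F_{1}}\subseteq\mathcal{I}_{F_{2}}$. Swapping the roles of $F_{1}$ and $F_{2}$ gives the reverse inclusion, hence $\mathcal{I}_{F_{1}}=\mathcal{I}_{F_{2}}$, and Theorem \ref{thm:3.1} then yields $F_{1}=F_{2}$.

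For existence, I would first show that the faces of $C^{p}$ containing $t$ are closed under intersection. If $F_{1}=C^{p}\cap\{x:c_{1}^{T}x=0\}$ and $F_{2}=C^{p}\cap\{x:c_{2}^{T}x=0\}$ both contain $t$, then for $x\in C^{p}$ we have $(c_{1}+c_{2})^{T}x=c_{1}^{T}x+c_{2}^{T}x\le0$ with equality precisely when both terms vanish, so $c_{1}+c_{2}$ is again a supporting functional and its face is exactly $F_{1}\cap F_{2}$, still a face containing $t$. Since $\mathcal{I}$ is finite there are only finitely many such faces; summing a supporting functional for each of them produces a single functional whose face $F^{\ast}$ equals the intersection of all faces containing $t$, that is, the minimal face containing $t$.

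It then remains to verify that $t$ lies in the relative interior of $F^{\ast}$, and this is the step I expect to be the main obstacle. The clean way to see it is by contradiction: were $t$ on the relative boundary of $F^{\ast}$, it would lie in a proper face of $F^{\ast}$, and since a face of a face of $C^{p}$ is again a face of $C^{p}$, that proper face would contradict the minimality of $F^{\ast}$. Turning this into a proof within the tools at hand amounts to identifying the maximal set $\mathcal{I}_{t}$ of indices that can simultaneously carry positive weight in a representation of $t$ (such a set exists because averaging two representations yields one whose support is their union) and showing $\mathcal{I}_{t}=\mathcal{I}_{F^{\ast}}$; the inclusion $\mathcal{I}_{t}\subseteq\mathcal{I}_{F^{\ast}}$ is immediate from the proof of Theorem \ref{thm:3.1}, while the reverse inclusion is exactly the point that requires ruling out a generator of $F^{\ast}$ that never appears with positive weight. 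Once $\mathcal{I}_{t}=\mathcal{I}_{F^{\ast}}$ is established, the relative-interior formula places $t$ in $\mathrm{relint}(F^{\ast})$, completing the argument.
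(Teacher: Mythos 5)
The paper offers no proof of this corollary --- it is asserted as an immediate consequence of Theorem \ref{thm:3.1} --- so there is nothing to compare against line by line. Your uniqueness argument is correct and complete: since $t\ne0$ rules out the zero-dimensional face, Theorem \ref{thm:3.1} applies to both $F_{1}$ and $F_{2}$, the relative-interior formula for a finitely generated cone gives a representation of $t$ with strictly positive weights on all of $\mathcal{I}_{F_{1}}$, and the supporting-functional computation from the proof of Theorem \ref{thm:3.1} forces that support into $\mathcal{I}_{F_{2}}$; symmetry and Theorem \ref{thm:3.1} finish it. Your construction of the minimal face $F^{\ast}$ is also correct: the sum of two supporting functionals cuts out exactly the intersection of their faces, and finiteness of the face lattice lets you intersect everything containing $t$.

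The gap is the one you flagged yourself: you never actually prove $t\in\mathrm{relint}(F^{\ast})$, and neither of your two sketches closes within the paper's toolkit without additional work. The ``face of a face is a face'' route needs two facts the paper does not establish: that a point of a polytope not in its relative interior lies in a proper face (this is the supporting hyperplane theorem applied inside $\mathrm{aff}(F^{\ast})$), and that a face of $F^{\ast}$ lifts to a face of $C^{p}$ (true for finitely generated cones via a perturbation $c+\epsilon d$ of supporting functionals, but it must be said). The combinatorial route reduces to showing $\mathcal{I}_{F^{\ast}}\subseteq\mathcal{I}_{t}$, i.e.\ that for each $j\notin\mathcal{I}_{t}$ there is a supporting functional $c_{j}$ of $C^{p}$ with $c_{j}^{T}t=0$ and $c_{j}^{T}f_{j}<0$; that is a Farkas/LP-duality statement, not something that falls out of Theorem \ref{thm:3.1}. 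Once you have such $c_{j}$'s you can sum them to get a face $F_{c}$ with $\mathcal{I}_{F_{c}}=\mathcal{I}_{t}$, and then the relative-interior formula does place $t$ in $\mathrm{relint}(F_{c})$ --- but as written the existence half is a program, not a proof. Supplying either the separating hyperplane argument or the Farkas step would complete it.
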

The next theorem pertains to determining the dimension of a face $F$,
which is the dimension of $\mathrm{aff\left(F\right)}$. Henceforth,
for a given face $F$ of $C^{p}$ we define$\mathcal{\mathcal{I}}_{F}=\left\{ i\in\mathcal{I}:f_{i}\in F\right\} $.
\begin{thm}
\label{thm:3.3} If $F$ is a face of $C_{p}$ of dimension at least
one then $\mathrm{aff(}F)=\mathrm{span}\left\{ f_{i},i\in\mathcal{I}_{F}\right\} $.\end{thm}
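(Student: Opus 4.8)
The plan is to use Theorem~\ref{thm:3.1}, which identifies $F$ with $\mathrm{cone}\{f_i : i\in\mathcal{I}_F\}$, together with the observation that a cone always contains the origin. Write $S=\{f_i : i\in\mathcal{I}_F\}$. One inclusion is immediate: by Theorem~\ref{thm:3.1} every point of $F$ lies in $\mathrm{span}(S)$, and since $\mathrm{span}(S)$ is a linear subspace it is closed under the affine combinations that define $\mathrm{aff}(F)$; hence $\mathrm{aff}(F)\subseteq\mathrm{span}(S)$.

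The reverse inclusion is where the cone structure does the work. Since $F$ is a cone, $0\in F$, and therefore $0\in\mathrm{aff}(F)$. I would then invoke the elementary fact that an affine subspace containing the origin is a linear subspace: if $x\in\mathrm{aff}(F)$ and $\alpha\in R$, then $\alpha x=\alpha x+(1-\alpha)\cdot 0$ is an affine combination of points of $\mathrm{aff}(F)$ and so lies in $\mathrm{aff}(F)$, and closure under sums follows in the same way. Consequently $\mathrm{aff}(F)$ is a linear subspace containing $F$, hence containing $S$, and so it contains the smallest such subspace, $\mathrm{span}(S)$. Combining the two inclusions yields $\mathrm{aff}(F)=\mathrm{span}(S)$, as claimed.

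The conceptual crux, and the only point that needs care, is the passage from affine hull to span. For a general subset the two differ---a single nonzero point has affine hull equal to itself but span equal to an entire line---so the equality asserted in the theorem is not a formality. What rescues it is precisely that $F$ contains the origin, which is why the appeal to Theorem~\ref{thm:3.1} is needed not merely to rewrite $F$ in terms of its generators but, more importantly, to guarantee $0\in F$.
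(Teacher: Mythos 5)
Your proof is correct and follows essentially the same route as the paper: both directions rest on Theorem~\ref{thm:3.1} together with the fact that the cone $F$ contains the origin, which is exactly the device the paper uses when it writes $x=\left(1-\sum_{i\in\mathcal{I}_{F}}\lambda_{i}\right)0+\sum_{i\in\mathcal{I}_{F}}\lambda_{i}f_{i}$ to realize a spanning combination as an affine one. Your phrasing of the reverse inclusion (an affine set containing $0$ is a linear subspace, hence contains the span by minimality) is just a mildly more abstract packaging of that same computation.
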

\begin{proof}
If $x\in\mathrm{aff}\left\{ F\right\} $ then there exist real numbers
$\lambda_{1},\lambda_{2},...,\lambda_{k}$ and points $x_{1},x_{2},...,x_{k}\in F$
such that $\sum_{j=1}^{k}\lambda_{j}=1$ and $x=\sum_{j=1}^{k}\lambda_{j}x_{j}$.
Since $x_{j}\in F$ then $x_{j}=\sum_{i\in\mathcal{\mathcal{I}}_{F}}\alpha_{ij}f_{i}$
for some non-negative real numbers $\alpha_{ij},i\in\mathcal{I}_{F}$.
Therefore, $x=\sum_{j=1}^{k}\lambda_{j}\left(\sum_{i\in\mathcal{\mathcal{I}}_{F}}\alpha_{ij}f_{i}\right)\in\mathrm{span}\left\{ f_{i},i\in\mathcal{I}_{F}\right\} $.
Conversely, if $x\in\mathrm{span}\left\{ f_{i},i\in\mathcal{I}_{F}\right\} $
then $x=\sum_{i\in\mathcal{I}_{F}}\lambda_{i}f_{i}$ for some real
numbers $\lambda_{i},i\in\mathcal{I}_{F}$. Since $0\in F$ we can
write
\[
x=\left(1-\sum_{i\in\mathcal{I}_{F}}\lambda_{i}\right)0+\sum_{i\in\mathcal{I}_{F}}\lambda_{i}f_{i}
\]
which is an affine combination of points in $F$. Therefore, $x\in\mathrm{aff(}F)$.

Since there are $d$ linearly independent $f_{i}'s$, it follows that
the dimension of $C^{p}$ is $d$. \end{proof}
\begin{thm}
\label{thm:3.4} If $F$ is a face of $C_{p}$ of dimension at least
one then the extreme rays of $F$ are the $f_{i}'s$ that belong to
that face. \end{thm}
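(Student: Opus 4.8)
The plan is to prove the set equality by a double inclusion: every $f_{i}$ with $i\in\mathcal{I}_{F}$ spans an extreme ray of $F$, and conversely every extreme ray of $F$ is spanned by one of these $f_{i}$. Throughout I would lean on Theorem \ref{thm:3.1}, which already gives $F=\mathrm{cone}\{f_{i}:i\in\mathcal{I}_{F}\}$, so that the only candidate generators are the $f_{i}$ with $i\in\mathcal{I}_{F}$. The crucial structural input, and the reason the statement is even true, is the standing assumption that the $f_{i}$ are binary $0/1$ vectors with first coordinate $1$: geometrically they are vertices of the unit cube sitting in the hyperplane $\{x:x_{1}=1\}$, so $F$ is a pointed cone whose cross-section by that hyperplane is $\mathrm{conv}\{f_{i}:i\in\mathcal{I}_{F}\}$.

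For the first inclusion, which I expect to be the main obstacle, I would exhibit for each fixed $i\in\mathcal{I}_{F}$ an explicit supporting hyperplane isolating the ray $R_{i}=\mathrm{cone}\{f_{i}\}$. Define $c\in R^{d}$ by $c_{k}=2(f_{i})_{k}-1$ (so $c_{k}=+1$ where $f_{i}$ has a $1$ and $c_{k}=-1$ where it has a $0$) and put $r=c^{T}f_{i}=\sum_{k}(f_{i})_{k}$. Because each $f_{j}$ is $0/1$, one checks that $c^{T}f_{j}\le r$ with equality if and only if $f_{j}=f_{i}$. Setting $\tilde{c}=c-r\,u$ with $u=(1,0,\ldots,0)$, and using that every $f_{j}$ has first coordinate $1$, gives $\tilde{c}^{T}f_{j}\le0$ with equality exactly when $f_{j}=f_{i}$. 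Extending conically to $x=\sum_{j\in\mathcal{I}_{F}}\lambda_{j}f_{j}\in F$ then yields $\tilde{c}^{T}x\le0$ on $F$ with equality precisely on $R_{i}$, so $R_{i}=F\cap\{x:\tilde{c}^{T}x=0\}$ is a face of $F$, and it is one-dimensional since $f_{i}\ne0$; that is, $R_{i}$ is an extreme ray. The nontrivial point here is that a generator of a cone need not in general span an extreme ray (it could lie in the interior of a higher-dimensional face); it is exactly the vertex-of-the-cube property that rules this out and makes the explicit functional $\tilde{c}$ available.

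For the converse inclusion I would take an arbitrary extreme ray $R$ of $F$, that is, a one-dimensional face of $F$. Since a face of a face is again a face (a standard fact of convex geometry), and $F$ is itself a face of $C^{p}$, the ray $R$ is a one-dimensional face of $C^{p}$. Theorem \ref{thm:3.1} then applies and gives $R=\mathrm{cone}\{f_{i}:f_{i}\in R\}$ with the index set nonempty. As $R$ is one-dimensional and all the $f_{i}$ it contains share first coordinate $1$, any two of them must coincide, so $R$ is the ray spanned by a single $f_{i}$, necessarily with $i\in\mathcal{I}_{F}$. Combining the two inclusions gives that the extreme rays of $F$ are exactly the rays through the $f_{i}$ belonging to $F$, as claimed.
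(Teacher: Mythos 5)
Your proof is correct, but it takes a genuinely different route from the paper's. The paper works with the decomposition characterization of an extreme ray: for the inclusion ``extreme rays $\subseteq$ rays through the $f_{i}$'s'' it notes that a point of an extreme ray written as $\sum_{i\in\mathcal{I}_{F}}\lambda_{i}f_{i}$ can have only one $\lambda_{i}>0$ (since no two $f_{i}$'s are proportional), and for the converse it assumes $f_{j}=\lambda_{1}x_{1}+\lambda_{2}x_{2}$ with $x_{1},x_{2}$ not proportional, uses Theorem \ref{thm:3.1} to rewrite this as $f_{j}=\sum_{i\ne j}\alpha_{i}'f_{i}$, and derives a contradiction from the fact that distinct $0/1$ vectors with first coordinate $1$ cannot be nonnegative (hence convex) combinations of one another. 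You instead stay with the supporting-hyperplane definition of a face that the paper actually sets up in Section 3, and for each $i\in\mathcal{I}_{F}$ you exhibit an explicit exposing functional $\tilde{c}=c-r\,e_{1}$ with $c_{k}=2(f_{i})_{k}-1$; this is not in the paper at all, and it buys you two things: it avoids the paper's implicit (and unproved) switch between the exposed-face and indecomposability characterizations of extreme rays, and it shows the stronger fact that each ray $\mathrm{cone}\{f_{i}\}$ is an exposed face of all of $C^{p}$, not merely of $F$. For the reverse inclusion you invoke ``a face of a face is a face'' and then apply Theorem \ref{thm:3.1} to the one-dimensional face itself, using the first-coordinate normalization to collapse its generators to a single $f_{i}$; this is cleaner than the paper's coefficient argument, though it does import one standard fact of polyhedral geometry that the paper never states. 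Both arguments ultimately rest on the same structural hypotheses (the $f_{i}$ are distinct $0/1$ vectors with first coordinate $1$), and both are valid; yours is the more explicit and self-contained of the two.
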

\begin{proof}
Suppose that $x\in F$ which implies that $x=\sum_{i\in\mathcal{I}_{F}}\lambda_{i}f_{i}$
for some non-negative real numbers $\lambda_{i},i\in\mathcal{I}_{F}$
with at least one $\lambda_{i}>0$. If $x$ is an extreme ray then
we must have exactly one $\lambda_{i}>0$. For if not, then $x$ would
be a conical combination of two linearly independent vectors (none
of the $f_{i}'s$ are scalar multiples of one another). But then $x=\lambda_{i}f_{i}$
for some $i$. 

For a given, $f_{j}\in F$ we need to show that $f_{j}$ is an extreme
ray. Suppose that this is not the case. Then $f_{j}$ can be written
as a conic combination of two vectors $x_{1},x_{2}\in C^{p}$ where
$x_{1}\ne kx_{2}$for some $k>0$. That is, $f_{j}=\lambda_{1}x_{1}+\lambda_{2}x_{2}$
for some $\lambda_{1}\lambda_{2}>0$. But, by theorem \ref{thm:3.1},
$x_{1}=\sum_{i\in\mathcal{I}_{F}}\alpha_{i1}f_{i}$ and $x_{2}=\sum_{i\in\mathcal{I}_{F}}\alpha_{i2}f_{i}$
so that 
\begin{eqnarray*}
f_{j} & = & \lambda_{1}\sum_{i\in\mathcal{I}_{F}}\alpha_{i1}f_{i}+\lambda_{2}\sum_{i\in\mathcal{I}_{F}}\alpha_{i2}f_{i}\\
f_{j} & = & \sum_{i\in\mathcal{I}_{F},i\ne j,\alpha_{i}'>0}\left(\alpha_{i}'\right)f_{i}
\end{eqnarray*}
Recalling that all the $f_{i}'s$ are distinct binary 0/1 vectors
we must have a contradiction.
\end{proof}
The following corollary of theorem \ref{thm:2.2} is due to \cite{Feinberg2}. 
\begin{cor}
\label{cor:3.5} The MLE exists if and only if\textup{ $t\in\mathrm{ri}\left(C^{p}\right)$.}\end{cor}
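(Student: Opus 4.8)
The plan is to establish the equivalence by directly translating the algebraic condition of Theorem \ref{thm:2.2} into the geometric language of the relative interior of the cone $C^{p}$. The bridge between the two formulations is the substitution $\lambda = y + n$, together with the observation that the sufficient statistic is itself a conical combination of the design points, namely $t = X^{T}n = \sum_{i\in\mathcal{I}} n(i) f_{i}$, so that $t \in C^{p}$ automatically.

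First I would recall from Section 3 that the relative interior of a finitely generated cone is the set of points admitting a representation in which every coefficient is strictly positive, i.e. $\mathrm{ri}(C^{p}) = \{\sum_{i\in\mathcal{I}} \lambda_{i} f_{i} : \lambda_{i} > 0 \text{ for all } i\in\mathcal{I}\}$. This is the only geometric fact I need beyond Theorem \ref{thm:2.2}.

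For the forward direction, I would suppose the MLE exists and invoke Theorem \ref{thm:2.2} to obtain a $y$ with $X^{T}y = 0$ and $y + n > 0$. Setting $\lambda = y + n$, I then have $\lambda(i) > 0$ for every $i$, and $X^{T}\lambda = X^{T}y + X^{T}n = t$, so $t = \sum_{i} \lambda(i) f_{i}$ is a strictly positive conical combination and hence lies in $\mathrm{ri}(C^{p})$. For the converse, I would start from $t \in \mathrm{ri}(C^{p})$, write $t = \sum_{i} \lambda(i) f_{i}$ with each $\lambda(i) > 0$, and put $y = \lambda - n$. Then $X^{T}y = X^{T}\lambda - X^{T}n = t - t = 0$ and $y + n = \lambda > 0$, so Theorem \ref{thm:2.2} delivers the existence of the MLE.

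Because both implications collapse to the single bookkeeping identity $X^{T}(y+n) = X^{T}n = t$, I do not anticipate any serious obstacle; the argument is essentially a relabeling. The one step deserving a moment's care is the characterization of $\mathrm{ri}(C^{p})$ as the set of strictly positive conical combinations, since everything hinges on it. As a sanity check on the degenerate case, note that because the first coordinate of each $f_{i}$ equals $1$, no strictly positive combination of the $f_{i}$'s can equal the origin, so $0 \notin \mathrm{ri}(C^{p})$; this is consistent with the MLE failing to exist when all cell counts vanish and $t = 0$.
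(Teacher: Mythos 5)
Your proof is correct and follows essentially the same route as the paper: both directions reduce to Theorem \ref{thm:2.2} via the substitution $y+n \leftrightarrow \lambda$, using the characterization of $\mathrm{ri}(C^{p})$ as the set of strictly positive conical combinations of the $f_{i}$'s given in Section 3. Your added remark that $0\notin\mathrm{ri}(C^{p})$ is a nice sanity check but not needed for the argument.
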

\begin{proof}
Suppose that the MLE exists. Then by theorem \ref{thm:2.2}, there
exists a $y$ such that $X^{T}y=0$ and $y+n>0$. But then $t=X^{T}\left(y+n\right)$
where $y+n>0$ and hence $t\in\mathrm{ri}\left(C^{p}\right)$. Now
suppose $t\in\mathrm{ri}\left(C^{p}\right)$. By definition, there
exists an $y>0$ such that $X^{T}y=t=X^{T}n$. But then $X^{T}\left(y-n\right)=0$
and $n+(y-n)>0$ and the MLE exists (by theorem \ref{thm:2.2}). 
\end{proof}

\section{An algorithm to determine $F$}

We have seen in section 3, in particular, corollary \ref{cor:3.2},
that there is a unique face $F$ of $C^{p}$ containing the sufficient
statistic $t=X^{T}n$ in its relative interior. We turn now to finding
$F$; for the MLE exists if and only if $F=C^{p}$. With $\mathcal{I}_{F}=\left\{ i\in\mathcal{I}:f_{i}\in F\right\} $
we let $X_{F}$ be an $\mathcal{I}_{F}\times J$ matrix with rows
$f_{i},i\in\mathcal{I}_{F}$. By theorem \ref{thm:3.1}, we know that
$F=\mathrm{cone}\left\{ f_{i},i\in\mathcal{I}_{F}\right\} $ and by
theorem \ref{thm:3.3}, the dimension of $F$ is $d_{F}=\mathrm{rank}\left(X_{F}\right)$.
Equipped with the next result, we can take an approach similar to
\cite{Geyer} and \cite{Feinberg2}, and find $\mathcal{I}_{F}$ by
solving a sequence of linear programs.
\begin{thm}
\label{thm:4.1} Any $a\ge0$ in $R^{\mathcal{I}}$ such that $t=X^{T}a$
must have $a(i)=0$ for any $i\in\mathcal{I\backslash\mathcal{I}}_{F}$.\end{thm}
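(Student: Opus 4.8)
The plan is to exploit the supporting-hyperplane description of the face $F$, exactly as in the proof of Theorem \ref{thm:3.1}. Since $F$ is a face of the cone $C^p$, there is a vector $c$ with $c^{T}x\le 0$ for all $x\in C^p$ and $F=C^p\cap\{x:c^{T}x=0\}$. Because every $f_i$ lies in $C^p$, the membership $f_i\in F$ is equivalent to $c^{T}f_i=0$; hence for this particular $c$ we have $\mathcal{I}_F=\{i\in\mathcal{I}:c^{T}f_i=0\}$, which is consistent with the standing definition $\mathcal{I}_F=\{i:f_i\in F\}$.

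First I would record the two facts that drive the argument. For $i\in\mathcal{I}_F$ we have $c^{T}f_i=0$ by construction. For $i\in\mathcal{I}\backslash\mathcal{I}_F$ the inequality $c^{T}f_i\le 0$ holds (since $f_i\in C^p$), and it must in fact be strict: if $c^{T}f_i=0$ then $f_i\in C^p\cap\{x:c^{T}x=0\}=F$, contradicting $i\notin\mathcal{I}_F$. Thus $c^{T}f_i<0$ for every $i\in\mathcal{I}\backslash\mathcal{I}_F$. I would also note that $t\in F$ (indeed $t$ lies in the relative interior of $F$ by Corollary \ref{cor:3.2}), so that $c^{T}t=0$.

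Next I would substitute the hypothesized representation $t=X^{T}a=\sum_{i\in\mathcal{I}}a(i)f_i$ into $c^{T}t=0$. This yields
\[
0=c^{T}t=\sum_{i\in\mathcal{I}}a(i)\,(c^{T}f_i)=\sum_{i\in\mathcal{I}\backslash\mathcal{I}_F}a(i)\,(c^{T}f_i),
\]
where the terms indexed by $\mathcal{I}_F$ vanish. Every summand on the right has $a(i)\ge 0$ and $c^{T}f_i<0$, so each is nonpositive; a sum of nonpositive numbers equal to zero forces each term to vanish. Since $c^{T}f_i<0$ strictly, I conclude $a(i)=0$ for all $i\in\mathcal{I}\backslash\mathcal{I}_F$, which is the claim.

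The argument is short, and the only delicate point is the strict inequality $c^{T}f_i<0$ off $\mathcal{I}_F$; this is where the characterization $\mathcal{I}_F=\{i:c^{T}f_i=0\}$ is essential, converting the weak supporting inequality into the sign information needed to kill the coefficients. The result can be read as the uniqueness-of-support statement underlying the linear-programming search for $\mathcal{I}_F$: no nonnegative representation of $t$ can place mass on a generator $f_i$ lying outside the face $F$.
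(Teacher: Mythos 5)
Your proof is correct and follows essentially the same route as the paper's: both use the supporting hyperplane $F=C^{p}\cap\{x:c^{T}x=0\}$ with $c^{T}f_{i}<0$ off $\mathcal{I}_{F}$, expand $0=c^{T}t=\sum_{i}a(i)(c^{T}f_{i})$, and conclude that the nonpositive terms indexed by $\mathcal{I}\backslash\mathcal{I}_{F}$ must all vanish. The only difference is presentational --- you argue directly that a sum of nonpositive terms equal to zero forces each term to zero, while the paper frames it as a contradiction (and, incidentally, your version cleanly avoids the index-set typos in the paper's displayed chain).
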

\begin{proof}
Since $F$ is a face of $C^{p}$ then it is of the form $F=C^{p}\cap\left\{ x\in R^{p}:c^{T}x=0\right\} $
where $c^{T}x<0$ for $x\in C^{p}\backslash F$. Suppose that $t=X^{T}a=\sum_{i\in\mathcal{I}}a(i)f_{i}$
and $a(i)>0$ for some $i\in\mathcal{I}_{F}$. Then
\begin{eqnarray*}
0 & = & \sum_{i\in\mathcal{I}_{F}}a(i)\left(c^{T}f_{i}\right)+\sum_{i\in\mathcal{I}\backslash\mathcal{I}_{F}}a(i)\left(c^{T}f_{i}\right)\\
 & = & \sum_{i\in\mathcal{I}_{F}}a(i)\left(c^{T}f_{i}\right)\\
 & < & 0
\end{eqnarray*}
which is a contradiction. 
\end{proof}
We now present an algorithm to find $\mathcal{I}_{F}$, which we call
the facial set, and show that it works. The algorithm requires solving
a sequence of linear programs that get progressively simpler until
the problem is solved. Let $\mathcal{I}_{0}=\left\{ i\in\mathcal{I}:n(i)=0\right\} $
and $\mathcal{I}_{+}=\left\{ i\in\mathcal{I}:n(i)>0\right\} $. Before
we begin, note that theorem \ref{thm:4.1} applies to $n$ and $\hat{m}$
since $X^{T}\hat{m}=X^{T}n=t$ so that $\mathcal{I}_{+}\subseteq\mathcal{I}_{F}=\left\{ i\in\mathcal{I}:\hat{m}>0\right\} $
or, in other words, $n(i)>0\Rightarrow i\in\mathcal{I}_{F}$ and $i\in\mathcal{I}_{F}\Longleftrightarrow\hat{m}(i)>0$. 
\begin{algorithm}[A repeated linear programming algorithm to find $\mathcal{I}_{F}$]
\emph{\label{alg:4.2} }

\textcolor{white}{\emph{.}}\emph{}\\
\emph{Input: The sufficient statistic $t$ }\\
\emph{Output: The facial set $\mathcal{I}_{F}$.}\end{algorithm}
\begin{enumerate}
\item Set $A=\mathcal{I}_{0}.$ If $A$ is empty then set $\mathcal{I}_{F}=\mathcal{I}\backslash A=\mathcal{I}.$\emph{
}STOP \\

\item Solve the linear program (LP)
\begin{eqnarray}
\mathrm{max} &  & z=\sum_{i\in A}a(i)\nonumber \\
s.t. &  & X^{T}a=t\label{4.1}\\
 &  & a\ge0\nonumber 
\end{eqnarray}

\item If the optimal objective value is $z=0$ then set $\mathcal{I}_{F}=\mathcal{I}\backslash A$.
STOP.
\item Let $a$ be a feasible solution to the LP. For any $i\in A$ such
that $a(i)>0$ remove that index from $A$. Repeat for all feasbile
solutions available (or even just the optimal solution).\\

\item If $A$ is empty, then $\mathcal{I}_{F}=\mathcal{I}\backslash A=\mathcal{I}$.
\emph{STOP}. Otherwise return to\emph{ }STEP 2.
\end{enumerate}
Keeping theorem \ref{thm:4.1} in mind, let us now show that the algorithm
works. If $A$ is empty to being with then it is clear that $t=X^{T}n\in\mathrm{ri}\left(C^{p}\right),F=C^{p},$
and $\mathcal{I}_{F}=\mathcal{I}\backslash A=\mathcal{I}$. The algorithm
terminates at STEP 1. Suppose that $A$ is not empty to start. Observe
that, at any given iteration, the set $\mathcal{I}\backslash A$ is
the set of $i\in\mathcal{I}$ such that we have found some $a\ge0$
in $R^{\mathcal{I}}$ with $X^{T}a=t$ that has $a(i)>0$. Next, observe
that the algorithm terminates if $A$ is empty or $A$ is nonempty
and the optimal objective value is $z=0$. If $A$ is empty then for
all $i\in\mathcal{I}$ we have found some $a\ge0$ in $R^{\mathcal{I}}$
with $x^{T}a=t$ that has $a(i)>0.$ It must be that $\mathcal{I}_{F}=\mathcal{I}$.
If $A$ is non-empty and the optimal objective value is $z=0$ then
every $a\ge0$ such that $X^{T}a=t$ has $a(i)=0$ for $i\in A$.
The result is that $A=\mathcal{I\backslash I}_{F}$ and $\mathcal{I}_{F}=\mathcal{I}\backslash A$.
In all cases we have found $\mathcal{I}_{F}$. 

Now, suppose that $y\in R^{\mathcal{I}}$ such that $y(i)>0\iff n(i)>0$.
Then $t'=X^{T}y=\sum_{i\in\mathcal{I}}y(i)f_{i}$ and $t=X^{T}n=\sum_{i\in\mathcal{I}}n(i)f_{i}$
belong to the same face $F$. Therefore, it is the location of the
zero cells in $n$ that determines $F$ as opposed to the magnitude
of the nonzero entries. For this reason, it is simplest to let 
\[
y(i)=\begin{cases}
1 & n(i)>0\\
0 & n(i)=0
\end{cases}
\]
and find $\mathcal{I}_{F}$ using $t'$. 

For models where $m$ is Markov with respect to a decomposable graph
$G$ (or extended MLE) a closed form expression for the MLE exists.
Theoretically, for such models, one need not resort to linear programming
to find $\mathcal{I}_{F},$ since $\hat{m}$ can be computed exactly.
Once this is done we know that $\mathcal{I}_{F}=\left\{ i\in\mathcal{I}:\hat{m}(i)>0\right\} $.
Practically speaking though, it is simpler to use algorithm \ref{alg:4.2}
for any model without first determining decomposability. 

We now give an example applying algorithm \ref{alg:4.2} 
\begin{example}
Consider a $3\times3\times3$ table for variables $a,b,c$ with counts:\\
\\
\begin{tabular}{|c|c|c|}
\hline 
0 & 1 & 0\tabularnewline
\hline 
0 & 1 & 1\tabularnewline
\hline 
1 & 1 & 1\tabularnewline
\hline 
\end{tabular}%
\begin{tabular}{|c|c|c|}
\hline 
1 & 1 & 1\tabularnewline
\hline 
1 & 1 & 1\tabularnewline
\hline 
1 & 0 & 0\tabularnewline
\hline 
\end{tabular}%
\begin{tabular}{|c|c|c|}
\hline 
1 & 1 & 1\tabularnewline
\hline 
1 & 1 & 1\tabularnewline
\hline 
1 & 0 & 0\tabularnewline
\hline 
\end{tabular}\\
\\
\\
and the model $[ab][bc][ac]$. Suppose $\mathcal{I}_{a}=\mathcal{\mathcal{I}}_{b}=\mathcal{I}_{c}=\{1,2,3\}$.
The leftmost array corresponds to $c=1$, the middle to $c=2,$ and
the rightmost to $c=3$. Let us apply our the algorithm to find $F$
for this data set. We begin, at STEP 1, by setting
\[
A=\mathcal{I}_{0}=\left\{ 111,131,211,322,332,323\right\} 
\]
where, by an abuse of notation, $131$, refers to cell $i=(1,3,1)$
with $a=1,b=3,c=1$. Since $A$ is nonempty we proceed to STEP 2.
The optimal solution to (\ref{4.1}) has $a(131)>0$ and so we remove
the cell $131$ from $A$ to get
\[
A=\left\{ 111,211,322,332,323,333\right\} 
\]
and return to STEP 2. Resolving (1) we find that, this time, the optimal
objective value is 0. At this point we set $\mathcal{I}_{F}=\mathcal{I}\backslash A=\mathcal{I}_{+}\cup\{131\}$
and the algorithm terminates. The dimension of $F$ is $\mathrm{rank}\left(X_{F}\right)$,
which in this case, is 18. 

In the remainder of this section, we proceed to show that when $F\subset C$,
maximum likelihood estimation can proceed almost as usual conditional
on $t\in F$. Recall once more, the likelihood as a function of $m$.

\begin{eqnarray*}
L(m) & = & \prod_{i\in\mathcal{I}}\exp\left(-m(i)\right)m(i)^{n(i)}\\
 & = & \prod_{i\in\mathcal{I}_{F}}\exp\left(-m(i)\right)m(i)^{n(i)}\prod_{i\in\mathcal{I}\backslash\mathcal{I}_{F}}\exp\left(-m(i)\right)m(i)^{n(i)}
\end{eqnarray*}
Since $n(i)=0$ for $i\in\mathcal{I}\backslash\mathcal{I}_{F}$, then
\begin{eqnarray*}
L(m) & = & \prod_{i\in\mathcal{I}_{F}}\exp\left(-m(i)\right)m(i)^{n(i)}\\
 & = & \exp\left(\sum_{i\in\mathcal{I}_{F}}n(i)\log m(i)-\sum_{i\in\mathcal{I}_{F}}m(i)\right)\\
 & = & \exp\left(\left\langle n_{F},\log m_{F}\right\rangle -\sum_{i\in\mathcal{I}_{F}}m(i)\right)
\end{eqnarray*}
where $n_{F}=\left(n(i),i\in\mathcal{I}_{F}\right)$ and $m_{F}=\left(m(i),i\in\mathcal{I}_{F}\right)$.
Let $\mathcal{M}_{F}$ be the linear span of the columns of $X_{F}$.
Then $\hat{m}$ satisfies $\hat{m}(i)=0$, $i\in\mathcal{I}\backslash\mathcal{I}_{F}$
and
\begin{equation}
\hat{m}_{F}=\left(\hat{m}(i),i\in\mathcal{I}_{F}\right)=\mathrm{argsup}{}_{\log m_{F}\in\mathcal{M}_{F}}\exp\left(\left\langle n_{F},\log m_{F}\right\rangle -\sum_{i\in\mathcal{I}_{F}}m(i)\right)\label{eq:4.2}
\end{equation}
The conditional density of $n$ given $t\in F$ is 
\begin{eqnarray*}
P(n(i),i\in\mathcal{I}|t\in F) & = & P(n(i),i\in\mathcal{I}|n(i)=0,i\in\mathcal{I}\backslash\mathcal{I}_{F})\\
 & = & P(n(i),i\in\mathcal{I}_{F})\\
 & = & \prod_{i\in\mathcal{I}_{F}}\exp\left(-m(i)\right)m(i)^{n(i)}\\
 & = & \exp\left(\left\langle n_{F},\log m_{F}\right\rangle -\sum_{i\in\mathcal{I}_{F}}m(i)\right)
\end{eqnarray*}
which is the same as (\ref{eq:4.2}). Therefore, when $F\subset C$,
the MLE of $m$ can be computed by conditional on $t\in F$. Practically
speaking, this means that we can treat the zeros in the cells $i\in\mathcal{I}\backslash\mathcal{I}_{F}$
as structural zeros rather than sampling zeros. Now, if $t\in F$
and $d_{F}=\mathrm{rank}$$\left(X_{F}\right)<d$ then $X_{F}$ is
not of full rank and the model $\log m_{F}=X_{F}\theta$ is over-parametrized;
only $d_{F}$ log-linear parameters will have finite estimates. We
can partially fit the model by selecting $d_{F}$ linearly independent
columns of $X_{F}$, constructing a new design matrix $X_{F}^{*}$,
and fitting the model $\log m_{F}=X_{F}^{*}\theta_{F}$. The new parameter
vector $\theta_{F}$ will contain $d_{F}$ components of $\theta$
which can be estimated. Estimates and standard errors of $m_{F}$
and $\theta_{F}$ can then be obtained as usual. When the contingency
table is not too sparse, and large sample $\chi^{2}$ goodness of
fit statistics are appropriate, the correct degrees of freedom is
$|\mathcal{I}_{F}|-d_{F}$ \cite{Feinberg2}. It is an open research
question whether the Bayesian Information Criterion \cite{Schwarz}
for comparing models should be corrected from $\hat{l}-\frac{d}{2}\log N$
to $\hat{l}-\frac{d_{F}}{2}\log N$ or to something else when $F\subset C$.
\end{example}

\section{The eMLEloglin package}

The main virtue of the eMLEloglin package is the ability to compute
the facial set $F$ for a given log-linear model and data set. It
does this using algorithm \ref{alg:4.2} described above. The required
linear programs are solved using the lpSolveAPI R package. If $F\subset C$,
then a modified contingency table can be constructed, where cells
in $\mathcal{I}\backslash\mathcal{I}_{F}$ are deleted, and passed
to the GLM package to obtain maximum likelihood estimates. The GLM
package will automatically identify a subset of the parameters that
can be estimated and the correct model dimension. We now look at two
examples.
\begin{example}
Consider the following $2\times2\times2$ contingency table for variables
$a,b,c$ with counts:\\
\\
\begin{tabular}{|c|c|}
\hline 
0 & 1\tabularnewline
\hline 
1 & 1\tabularnewline
\hline 
\end{tabular}%
\begin{tabular}{|c|c|}
\hline 
1 & 1\tabularnewline
\hline 
1 & 0\tabularnewline
\hline 
\end{tabular}\\
\\
and the model $[ab][bc][ac]$. Suppose $\mathcal{I}_{a}=\mathcal{\mathcal{I}}_{b}=\mathcal{I}_{c}=\{1,2\}$.
The leftmost array corresponds to $c=1$, and the rightmost array
to $c=2$. This is one of the earliest known examples identified where
the MLE does not exist \cite{Haberman}. We now show how to compute
$F$ using the eMLEloglin package. We first create a contingency table
to hold the data:\begin{verbatim}
> x <- matrix(nrow = 8, ncol = 4) 
> x[,1] <- c(0,0,0,0,1,1,1,1) 
> x[,2] <- c(0,0,1,1,0,0,1,1) 
> x[,3] <- c(0,1,0,1,0,1,0,1) 
> x[,4] <- c(0,1,2,1,4,1,3,0) 
> colnames(x) = c("a", "b", "c", "freq") 
> x
> x a b c freq 
  1 0 0 0 0 
  2 0 0 1 1 
  3 0 1 0 2 
  4 0 1 1 1 
  5 1 0 0 4 
  6 1 0 1 1
  7 1 1 0 3
  8 1 1 1 0 
\end{verbatim}We can then use the $\mathtt{facial\_set}$ function:\begin{verbatim}
> f <- facial_set (data = x, formula = freq ~ a*b + a*c + b*c) 
> f
$formula 
freq ~ a*b + a*c + b*c

$model.dimension 
[1] 7

$status 
[1] "Optimal objective value 0"

$iterations 
[1] 1

$face   
  a b c freq facial_set 
1 0 0 0    0          0 
2 0 0 1    1          1 
3 0 1 0    2          1 
4 0 1 1    1          1 
5 1 0 0    4          1 
6 1 0 1    1          1 
7 1 1 0    3          1 
8 1 1 1    0          0

$face.dimension 
[1] 6

$maxloglik [1] 
-1.772691 
\end{verbatim}The output begins by giving the model formula and the original dimension.
Under Poisson sampling the model of no three-way interaction has 7
free parameters. The line mentioning status is for debugging purposes
to know how the algorithm terminated. For this example, algorithm
\ref{alg:4.2} terminated when an optimal objective value of $z=0$
was found. The next line indicates that the algorithm required only
one iteration to find $F$. The table in $\mathrm{\mathtt{f2\$face}}$
is probably the most important output. It indicates that 
\begin{eqnarray*}
\mathcal{I}_{F} & = & \left\{ 001,010,011,100,101,110\right\} \\
\mathcal{I}\backslash\mathcal{I}_{F} & = & \left\{ 000,111\right\} 
\end{eqnarray*}
The implication of the fact that here $\mathcal{I}_{F}\ne\mathcal{I}$
is that the dimension of $F$ is 6 which we see under \$face.dimension.
Since $|\mathcal{I}_{F}|=d_{F}=6$, the model is effectively saturated
and the fitted values are the same as the observed values. We can
see this by passing the data with the cells in $\mathcal{I\backslash}\mathcal{I}_{F}$
removed to the glm function. 
\end{example}
\begin{verbatim}
> fit <- glm (formula = freq ~ a * b + a * c+ b * c, 
              data = x[as.logical(f2$face$facial_set),] 

> fit Call:  glm(formula = freq ~ a * b + a * c + b * c, 
                 data = x[as.logical(f2$face$facial_set),      ])

Coefficients: (Intercept)            a            b            c 
            	   2.000e+00    2.000e+00   -1.479e-15   -1.000e+00
        a:b          a:c        b:c       
 -1.000e+00   -2.000e+00	       NA  

Degrees of Freedom: 5  Total (i.e. Null);  0 Residual 
Null Deviance:	  8  Residual Deviance: 6.015e-30 	
AIC: -383.4 

> fit$fitted.values 
2 3 4 5 6 7  
1 2 1 4 1 3  
\end{verbatim}As expected, one parameter is not able to be estimated; and R handles
this automatically. Note that the residual degrees of freedom is correctly
calculated to be $|\mathcal{I}_{F}|=d_{F}=0$. Let us work through
a larger example now with the Rochdale data.
\begin{example}
The eMLEloglin package includes a sparse dataset from the household
study at Rochdale referred to in \cite{Whittaker}. The Rochdale data
set is a contingency table representing the cross classification of
665 individuals according to 8 binary variables. The study was conducted
to elicit information about factors affecting the pattern of economic
life in Rochdale, England. The variables are as follows: a. wife economically
active (no, yes); b. age of wife >38 (no, yes); c. husband unemployed
(no, yes); d. child$\le4$ (no, yes); e. wife's education, highschool+
(no, yes); f. husband's education, highschool+ (no, yes); g. Asian
origin (no, yes); h. other household member working (no, yes). The
table is sparse have 165 counts of zero, 217 counts with at most three
observations, but also a few large counts with 30 or more observations.
The Rochdale data comes preloaded with the package. Suppose we are
interested in the model $\mathrm{|ad|ae|be|ce|ed|acg|dg|fg|bdh|}$
which is the model with the highest corrected BIC for this data set.
We give a list of the top models by corrected and uncorrected BIC
below for this data set. The required R code to find the facial set
for this model is:
\end{example}
\begin{verbatim}
data(rochdale)
f <- facial_set  (data = rochdale, 
                  formula = freq ~ a*d + a*e + b*e + c*e + e*f + 
                                 a*c*g + d*g + f*g + b*d*h)
\end{verbatim}From the output we see that the model lies on a face of dimension
22. Since the original model dimension is 24, two parameters will
not be estimable. Given the sparsity of the table, a goodness of fit
test would not be appropriate. The fitted model can be obtained from
the GLM function with the code:\begin{verbatim}
fit <- glm (formula = freq ~ a*d + a*e + b*e + c*e + e*f + 
                             a*c*g + d*g + f*g + b*d*h)
               data = rochdale[as.logical(f2$face$facial_set),]) 
\end{verbatim}The GLM function automatically determines that $\theta_{acg}$ and
$\theta_{bdh}$ can not be estimated; which is because the $\mathrm{acg}$
and $\mathrm{bdh}$ margins are both zero. The residual degrees of
freedom is correctly calculated at $|\mathcal{I}_{F}|-d_{F}=196-22=174.$ 

Since the Rochdale dataset seems to be of some interest recently we
give the top five models in terms of corrected and the usual BIC (abbreviated
cBIC and BIC in the tables, respectively) found while exploring models
using the MC3 algorithm.\\
\\
\begin{tabular}{|c|c|c|c|}
\hline 
 & cBIC & Model Dim. & Face Dim.\tabularnewline
\hline 
$\mathrm{|ad|ae|be|cd|ef|acg|dg|fg|bdh|}$ & 985.3 & 24 & 22\tabularnewline
\hline 
$\mathrm{|ad|ae|be|ce|cf|ef|acg|dg|fg|bdh|}$ & 985.2 & 25 & 23\tabularnewline
\hline 
$\mathrm{|ad|ae|be|ce|cf|df|ef|acg|dg|fg|bdh}$ & 984.4 & 26 & 24\tabularnewline
\hline 
$\mathrm{|ad|ae|be|ce|df|ef|acg|dg|fg|bdh|}$ & 984.3 & 25 & 23\tabularnewline
\hline 
$\mathrm{|ac|ad|ae|be|ce|ef|ag|cg|dg|fg|bdh}$ & 984.0 & 23 & 22\tabularnewline
\hline 
\end{tabular}\\
\\
\\
\begin{tabular}{|c|l|c|c|}
\hline 
Model & BIC & Model Dim. & Face Dim.\tabularnewline
\hline 
$\mathrm{|ac|ad|bd|ae|be|ce|ef|ag|cg|dg|fg|bh|dh|}$ & 981.3 & 22 & 22\tabularnewline
\hline 
$\mathrm{|ac|ad|bd|ae|be|ce|cf|ef|ag|cg|dg|fg|bh|dh|}$ & 981.1$^{*}$ & 23 & 23\tabularnewline
\hline 
$\mathrm{|ac|ad|ae|be|ce|ef|ag|cg|dg|fg|bdh|}$ & 980.7 & 23 & 22\tabularnewline
\hline 
$\mathrm{|ac|ad|ae|be|ce|cf|ef|ag|cg|dg|fg|bdh|}$ & 980.5$^{**}$ & 24 & 23\tabularnewline
\hline 
$\mathrm{|ac|ad|bd|ae|be|ce|ef|ag|cg|dg|fg|bh|}$ & 980.4 & 21 & 21\tabularnewline
\hline 
\end{tabular}\\
\\
\\
With the exception of $\mathrm{cf}$ and the three factor interactions,
the model with the highest uncorrected BIC is the model identified
by \cite{Whittaker} who, in any case, limited himself to considering
at most two-factor interactions because of the sparsity of the table.
Whittaker fit the all two factor interaction model, and then deleted
the terms that we non-significant and arrived at the model $\mathrm{|ac|ad|bd|ae|be|ce|cf|ef|ag|cg|dg|fg|bh|dh}$
marked by an asterisk ({*}) above.

We note that $\mathrm{bdh}$ interaction has also been identified
by \cite{Dobra1}. The model they selected is marked with ({*}{*})
above. Using Mosaic plots, \cite{Hofmann} also observed that there
is a strong hint of $\mathrm{bdh}$ interaction. The dataset was also
analyzed in \cite{Dobra2}.

\bibliographystyle{plain}
\bibliography{mybib}

\end{document}